\newtheorem{thm}{Theorem}[section]
\newtheorem{lem}{Lemma}[section]
\newtheorem{prop}{Proposition}[section]
\newtheorem{cor}{Corollary}[section]
\def\choose#1#2{ \begin{pmatrix} #1 \\ #2 \end{pmatrix} }
\journal{Journal of Mathematical Analysis and Applications}
\begin{document}

\begin{frontmatter}



\title{Approximating Annual Mean Incoming Solar Radiation}


\author[label1,label2]{Alice Nadeau}
\author[label2]{Richard McGehee} 

\address[label1]{583 Malott Hall\\ 
Department of Mathematics\\
Cornell University\\
Ithaca, NY 14850\\
 {\tt a.nadeau@cornell.edu}}
\address[label2]{School of Mathematics\\
University of Minnesota }

\begin{abstract}
We derive the Legendre series expansion for the insolation distribution on rapidly rotating planets as a function of sine of the latitude and the planet's obliquity. We give an explicit formula for the coefficients of this series as it depends on the obliquity and approximate the convergence rate.  We determine the optimal truncation of the series for use in climate models and compare this approximation to other approximations in the literature.
\end{abstract}


\begin{highlights}
\item Explicit formula for coefficients of series approximation for insolation is given
\item Optimal truncation of the series for use in climate models is computed
\item Optimal truncation is best continuous approximation in the literature
\end{highlights}

\begin{keyword}
climate  \sep exoplanets  \sep solar radiation  \sep Legendre polynomials  \sep Spherical harmonics


\MSC[2020] 86A08 \sep 85A20 \sep 42C10 \sep 41A10 \sep 41A25 \sep 65Z05
\end{keyword}

\end{frontmatter}


\section{Introduction}

Interest in modeling the climates of other planets has recently been ignited due to the fly-by of the Pluto-Charon system by the NASA probe New Horizons \cite{Stern2018} and the discovery of seven Earth-sized planets orbiting the nearby star TRAPPIST-1 and over 4,000 other planets outside our solar system \cite{Gillon2017,Thompson2018}.  
Surprised by the complexity of Pluto and Charon and inspired by the prospect of liquid water and life in the TRAPPIST-1 system, scientists are now trying to understand these observations through the use of mathematical models \cite{Earle2017a,Earle2018,Checlair2017}.  

An important component of climate models of any complexity is the amount of sunlight reaching the planet's surface, referred to as \emph{insolation} (for incoming solar radiation). Although approximations to planetary insolation distribution already exist in the literature many do not have explicit dependence on the planet's orbital parameters or are not continuous functions across the planet's surface, causing problems in the modeling framework (e.g. discussion in \cite{Earle2017a}).   Here we present an approximation that has explicit dependence on the planet's orbital parameters, is continuous across the planet's surface, and has lower error to the true insolation distribution than other approximations in the literature.

The insolation at any point on a planet is a function of the latitude and longitude of the point, the planet's orbital parameters (semi-major axis, obliquity, and precession angle), the position of the planet along its orbit, and the solar energy output.  Using Kepler's laws and integrating over an entire year, one can show that the global annual average power flux (Watts per square meter) is given by
\[
  Q(a,e) = \frac{K}{a^2\sqrt{1 - e^2}},
\]
where K is proportional to the solar output, \(a\) is the semi-major axis of the planet's orbit, and \(e\) is the eccentricity \cite{McGehee2012}.  This energy is then distributed across the planet's surface and is dependent on the planet's rotation rate relative to its orbital rate, obliquity angle, and precession angle. 

Here we restrict our study to the insolation distribution on planets that are rapidly rotating. While it should be noted that there is no definitive definition of ``rapid rotation'' in the literature, it is generally agreed that Earth is a rapidly rotating planet. The only planets in our solar system with slower rotation rates than Earth are Mercury (with 3 rotations every 2 revolutions) and Venus (with $-0.92$ rotations every revolution).  Mercury and Venus are both slowly rotating by the colloquial definition used in the literature (e.g. \cite{Dobrovolskis2009,Dobrovolskis2013,Dobrovolskis2015}).  
We will define rapid rotation to be any rotation rate which causes the annual average insolation distribution to be rotationally symmetric about the planet's axis of rotation.

For a rapidly rotating planet, the orbital parameters and the position of the planet do not change substantially during a day, leading to a simplification of distribution by latitude of the annual average insolation. In this case, annual average insolation distribution reduces to a function only of the obliquity ($\beta$) and latitude ($\overline\phi$) \cite{Ward1974, McGehee2012, Dobrovolskis2013} and is given by
\begin{equation} \label{eq:s}
  s(\overline\phi,\beta) = \frac{2}{\pi^2}
    \int_0^{2\pi} \sqrt{ 1 - 
      \left(
        \cos\overline\phi\sin\beta\sin\gamma - \sin\overline\phi\cos\beta
      \right)^2 } d\gamma ,
\end{equation}
where \(\gamma\) is the longitude.  Since the latitude is measured up and down from the equator, we have \(-\pi/2 \le \overline\phi \le \pi/2\), while, since obliquity is the angle between the angular momentum vector of the planetary orbit and the angular momentum vector of the planetary spin, we have \(0 \le \beta \le \pi\).

For each fixed obliquity \(\beta\), \(s(\overline\phi,\beta)\) is the distribution of insolation across the surface of the planet, so the annual average insolation at latitude \(\overline\phi\) is given by
\[
  Q(a,e)s(\phi,\beta).
\]
We derive an infinite series representation of the function $s(\overline\phi,\beta)$ in terms of Legendre polynomials (Theorem \ref{Thm-Main}).  Truncating this series gives a polynomial approximation for the insolation function, allowing for faster computation of the insolation while also avoiding the numerical approximation of the integral.  
A quadratic approximation of $s$ for the Earth's obliquity has been used  extensively (e.g. \cite{Abbot2011,Barry2017,McGehee2012,McGehee2014,Walsh2015,Widiasih2013}).  However, for other planets, a quadratic approximation fails to capture the qualitative behavior of the insolation as a function of latitude.  In a previous paper \cite{Nadeau2017}, Nadeau and McGehee introduced a sixth order polynomial approximation and showed that it captures the characteristics of Pluto's insolation.  Here that result is generalized and placed on a firm mathematical foundation using classical results about spherical harmonics.

In modeling studies, it is usually most appropriate to take sine of the latitude instead of latitude so that the infinitesimal $d\eta=\cos\overline\phi d\overline\phi$ is proportional to the area of the latitudinal strip parallel to $\overline\phi$.  Taking cosine of obliquity makes $s$ symmetric in sine of the latitude ($\eta$) and cosine of obliquity ($\zeta$):
\begin{align}
  s(\eta,\zeta) =  \frac{2}{\pi^2}
    \int_0^{2\pi} \sqrt{ 1 - 
      \left(
        \sqrt{1-\eta^2}\sqrt{1-\zeta^2}\sin\gamma - \eta\zeta
      \right)^2 }\,d\gamma .
\label{EQ-insol-fast-rotation}
\end{align}

The Legendre polynomials \(P_i(x), i = 0,1,2,\ldots ,\) form a complete orthogonal set in the space \(L^2([-1,1])\) with the properties \(P_i\) has degree \(i\) and \(P_i(1)=1\) \cite{BatemanII}.  Therefore, the products 
\(P_{i,j}(x,y) = P_i(x)P_j(y)\) form a complete orthogonal set in the space
\(L^2([-1,1]\times[-1,1])\).  Thus we can write
\begin{equation} \label{eq:sfullexpansion}
  s(\eta,\zeta) = 
  \sum_{i=0}^\infty \sum_{j=0}^\infty c_{ij}P_i(\zeta)P_j(\eta).
\end{equation}
The series naturally converges in \(L^2\), and the convergence is also pointwise (see Section \ref{Section-Conv}).   Surprisingly, \(c_{ij}\) is diagonal, in particular:
\begin{thm} \label{Thm-Main}
The annual average insolation distribution function \eqref{EQ-insol-fast-rotation} can be written
\begin{align}
  s(\eta,\zeta) = \sum_{n=0}^\infty A_{2n}P_{2n}(\zeta)P_{2n}(\eta) ,
\label{EQ:Insol-Approx}
\end{align}
where \(P_{2n}\) is the Legendre polynomial of degree \(2n\), and where
\[  
  A_{2n} = 
    \frac{(-1)^n(4n+1)}{2^{2n-1}}
    \sum_{k=0}^n \choose{2n}{n-k} \choose{2n+2k}{2k} \choose{1/2}{k+1}.
\]
\end{thm}
\noindent Here we are using the standard notation
\[
  \choose{r}{j} = \frac{r(r-1)\cdots(r-j+1)}{j!}.
\]
\noindent The proof of this theorem relies on two main lemmas which are stated and proved in the following two sections.  The proof of the theorem is given in Section \ref{Proof-Main-Thm}.  In Section \ref{Section-Conv} we discuss convergence properties of the approximation and in Section \ref{Section-Comp} we compare the approximation to others that appear in the literature.

\section{Averages of rotationally symmetric functions on $S^2$} \label{sec:Theorem}

The proof of Lemma \ref{Lem-1} relies on rotational symmetries of the spherical harmonics; however some ambiguities can arise when discussing rotations. For this reason, we first lay out definitions that will be used in the proof.  

In $\mathbb R^3$, any orientation can be achieved by composing three elemental rotations, starting from a known standard direction. Let the standard direction be $(x,y,z)$ and the elemental matrices be
\[R_1(\cdot)=\left[\begin{array}{ccc}
\cos(\cdot) & 0 & \sin(\cdot)\\
0 & 1 & 0\\
-\sin(\cdot) & 0 & \cos(\cdot)\\
\end{array}\right]\quad\text{and}\quad R_2(\cdot)=\left[\begin{array}{ccc}
\cos(\cdot) & -\sin(\cdot) & 0 \\
\sin(\cdot) & \cos(\cdot) & 0\\
0 & 0 & 1\\
\end{array}\right].\]
The rotation $R_1(\cdot)$ rotates the $(y,z)$-plane around the $x$-axis using the right hand rule while $R_2(\cdot)$ rotates the $(x,y)$-plane around the $z$-axis using the right hand rule. 
The rotation matrix $R{(\rho,\beta,\alpha)}$ defined by
\[R(\rho,\beta,\alpha)=R_2(\rho)R_{1}(\beta)R_{2}(\alpha)\]
is intended to operate by pre-multiplying the column vector $(x,y,z)$ and represents an active rotation.\footnote{The matrices act on the coordinates of vectors defined in the initial fixed reference frame and give, as a result, the coordinates of a rotated vector defined in the same reference frame.}  Each matrix is meant to represent the composition of intrinsic rotations.\footnote{Rotations around the axes of the rotated reference frame.}  In terms of orbital parameters, if $\alpha=0$, then $\beta$ is the obliquity angle and $\rho$ is the precession angle.

Any point $(\hat x,\hat y, \hat z)$ can be decomposed in the elemental rotations relative to the standard direction as
\[\left[\begin{array}{c}
\hat x \\
\hat y\\
\hat z\\
\end{array}\right]=R(\rho,\beta,\alpha)\left[\begin{array}{c}
x\\
y\\
z\\
\end{array}\right].\]
Furthermore, write each set of coordinates in spherical coordinates as
\[\begin{array}{l}
x=\cos\theta\sin\phi\\
y=\sin\theta\sin\phi\\
z=\cos\phi
\end{array}\quad \text{and}\quad
\begin{array}{l}
\hat x=\cos\hat\theta\sin\hat\phi\\
\hat y=\sin\hat\theta\sin\hat\phi\\
\hat z=\cos\hat\phi
\end{array}\]
where $\theta$ and $\hat\theta$ are the azimuth angles as measured counterclockwise from the $x$- and $\hat x$-axes, respectively and $\phi$ and $\hat\phi$ are the usual polar angles measured relative to the positive vertical axis.  Notice that in planetary nomenclature, these angles give the co-latitudes (the angle measured relative to the equator is the latitude). 
Let  $R_{\rho,\beta,\alpha}$ denote the same rotation described by $R(\rho,\beta,\alpha)$ but which relates $(\hat\theta,\hat\phi)$ to $(\theta,\phi)$ so that
\[(\hat \theta,\hat \phi)=R_{\rho,\beta,\alpha}(\theta,\phi), \quad\text{and} \quad(\theta,\phi)=R_{\rho,\beta,\alpha}^{-1}(\hat\theta,\hat\phi).\]
Note that
\begin{align}
\cos\hat\phi=\hat z=\hat z(\alpha,\beta,\theta,\phi)=\cos\beta\cos\phi-\sin\beta\sin\phi\cos(\alpha+\theta).
\label{eq:z-hat}
\end{align}

\begin{lem} \label{Lem-1}
Suppose $f(\theta,\phi)$ is a square integrable function on $S^{2}\subset \mathbb{R}^3$  where $\theta$ is the azimuthal angle and $\phi$ is the polar angle.  Suppose also that there exists a coordinate system $(\hat\theta,\hat\phi)$ where the function $f$ does not depend on the azimulthal angle $\hat\theta$, i.e. there exists angles $\alpha$, $\beta,$ and $\rho$ such that the proper Euler rotation $R_{\rho,\beta,\alpha}$ yields a coordinate system
\[(\hat\theta,\hat\phi)=R_{\rho,\beta,\alpha}(\theta,\phi)\]
with $g(\hat\theta,\hat\phi)=f(R_{\rho,\beta,\alpha}^{-1}(\hat\theta,\hat\phi))$ and $\partial_{\hat\theta}g(\hat\theta,\hat\phi)=0$.
Then we have 
\[g(\cos\hat\phi)=g(\alpha,\beta,\theta,\phi)=\sum_{n=0}^\infty \tilde c_n \left(\sum_{k=-n}^{n} Y_{n}^k(\alpha,\beta) Y_{n}^{k}(\theta,\phi)\right)\]
where $Y_{n}^{k}(\theta,\phi)=N_{n,k}e^{ik\theta}P_{n}^k(\cos\phi)$ is the $n-k$ spherical harmonic with normalizing factor 
\[N_{n,k}=(-1)^k\sqrt{\frac{2n+1}{4\pi}\frac{(n-k)!}{(n+k)!}},\]
 associated Legendre polynomial $P_{n}^k$, and
\[\tilde c_n={2\pi}\int_{-1}^1g(\hat z)P_n(\hat z)d\hat z\]
where $P_n$ is the $n$-th Legendre polynomial.
Furthermore, for $\beta$ and $\theta$ fixed we have
\[\int_{0}^{2\pi}g(\alpha,\beta,\theta,\phi)d\theta=\int_{0}^{2\pi}g(\alpha,\beta,\theta,\phi)d\alpha=\sum_{n=0}^{\infty}\tilde b_nP_{n}(\cos\beta)P_{n}(\cos\phi)\]
where
\[\tilde b_n =\pi(2n+1)\int_{-1}^1 g(\hat z)P_n(\hat z)d\hat z.\]
\end{lem}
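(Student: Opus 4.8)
The plan is to combine the Legendre expansion of the single-variable profile $\tilde f$ with the classical addition theorem for spherical harmonics. First I would note that, since $f\in L^2(S^2)$ and $f$ reduces to $\tilde f(\cos\hat\phi)$ in the rotated frame, $\tilde f$ is square integrable on $[-1,1]$, so it admits the Legendre expansion
\[\tilde f(\hat z)=\sum_{n=0}^\infty \frac{2n+1}{2}\left(\int_{-1}^1\tilde f(t)P_n(t)\,dt\right)P_n(\hat z),\]
convergent in $L^2([-1,1])$.

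The crucial geometric input is equation~(\ref{eq:z-hat}): the quantity $\hat z=\cos\hat\phi=\cos\beta\cos\phi-\sin\beta\sin\phi\cos(\alpha+\theta)$ is exactly the cosine of the geodesic angle between the two points of $S^2$ whose spherical coordinates (azimuth, polar angle) are $(\alpha,\beta)$ and $(\theta,\phi)$, read off with the sign convention built into $R_{\rho,\beta,\alpha}$. I would therefore apply the addition theorem
\[P_n(\hat z)=\frac{4\pi}{2n+1}\sum_{k=-n}^n Y_n^k(\alpha,\beta)\,Y_n^k(\theta,\phi),\]
where the appearance of the plain product rather than $Y_n^k(\alpha,\beta)\,\overline{Y_n^k(\theta,\phi)}$ reflects precisely the fact that the azimuthal dependence enters as $\cos(\alpha+\theta)$ rather than $\cos(\alpha-\theta)$. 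Substituting into the Legendre expansion and collecting the scalar prefactors gives $\frac{2n+1}{2}\cdot\frac{4\pi}{2n+1}=2\pi$ times $\int_{-1}^1\tilde f P_n$, i.e.\ exactly the claimed coefficient $\tilde c_n$, which establishes the first displayed identity.

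For the ``furthermore'' statement I would integrate the series term by term in $\theta$ over $[0,2\pi]$, which is legitimate by the $L^2$ convergence together with the orthogonality of $\{e^{ik\theta}\}$. Writing $Y_n^k(\theta,\phi)=N_{n,k}e^{ik\theta}P_n^k(\cos\phi)$ and using $\int_0^{2\pi}e^{ik\theta}\,d\theta=2\pi\,\delta_{k,0}$ annihilates every term with $k\neq 0$, leaving only the $k=0$ contribution in each degree. Since $Y_n^0(\theta,\phi)=N_{n,0}P_n(\cos\phi)$ with $N_{n,0}^2=(2n+1)/(4\pi)$, the surviving coefficient is $\tilde c_n\cdot 2\pi N_{n,0}^2=\tilde c_n\,(2n+1)/2=\pi(2n+1)\int_{-1}^1\tilde f P_n=\tilde b_n$, and the sum collapses to $\sum_n \tilde b_n P_n(\cos\beta)P_n(\cos\phi)$. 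The integral in $\alpha$ is handled identically, since integrating $e^{ik\alpha}$ against $d\alpha$ selects the same $k=0$ terms and the formula is symmetric in the two points.

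The step I expect to be the main obstacle is the bookkeeping in the addition-theorem step: one must verify that the particular sign pattern of~(\ref{eq:z-hat})---the overall minus sign and the $\cos(\alpha+\theta)$ (sum rather than difference of azimuths)---is matched exactly by the product $Y_n^k(\alpha,\beta)\,Y_n^k(\theta,\phi)$ under the stated normalization $N_{n,k}=(-1)^k\sqrt{\tfrac{2n+1}{4\pi}\tfrac{(n-k)!}{(n+k)!}}$, accounting for the Condon--Shortley phase and the parity of the associated Legendre functions $P_n^k$. Everything else is a routine application of orthogonality and collection of constants.
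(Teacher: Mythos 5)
Your proposal follows, in outline, exactly the paper's proof: expand $\tilde f$ in Legendre polynomials with coefficients $\tfrac{2n+1}{2}\int_{-1}^1\tilde f P_n$, convert $P_n(\hat z)$ into an \emph{unconjugated} product of spherical harmonics via the addition theorem, and collapse to the $k=0$ terms by integrating $e^{ik\theta}$ (or $e^{ik\alpha}$) over a period; your constants ($\tfrac{2n+1}{2}\cdot\tfrac{4\pi}{2n+1}=2\pi$, and $2\pi N_{n,0}^2\,\tilde c_n=\tilde b_n$) agree with the paper's.

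However, the step you flagged as the main obstacle and left unverified is a genuine problem --- and it is glossed over in essentially the same way in the paper. The unconjugated product accounts for the sum of azimuths, but \emph{not} for the minus sign in equation~(\ref{eq:z-hat}): taking the second point to be $(-\alpha,\beta)$ gives
\[
\frac{4\pi}{2n+1}\sum_{k=-n}^n Y_n^k(\alpha,\beta)Y_n^k(\theta,\phi)
=\frac{4\pi}{2n+1}\sum_{k=-n}^n Y_n^k(\theta,\phi)\,Y_n^k(-\alpha,\beta)^*
=P_n\bigl(\cos\beta\cos\phi+\sin\beta\sin\phi\cos(\alpha+\theta)\bigr),
\]
with a plus sign, whereas $\hat z=\cos\beta\cos\phi-\sin\beta\sin\phi\cos(\alpha+\theta)$. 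To realize $\hat z$ as a genuine spherical distance one must take the second point to be $(\pi-\alpha,\beta)$, and since $Y_n^k(\pi-\alpha,\beta)^*=(-1)^kY_n^k(\alpha,\beta)$, the correct identity is
\[
P_n(\hat z)=\frac{4\pi}{2n+1}\sum_{k=-n}^n(-1)^k\,Y_n^k(\alpha,\beta)Y_n^k(\theta,\phi).
\]
A one-line check: at $n=1$, $\alpha=\theta=0$, $\beta=\phi=\pi/2$, one has $\hat z=-1$, while the unconjugated sum evaluates to $+1$. (The paper's device of substituting the negative polar angle $\tilde\beta=-\beta$ into the addition formula is not legitimate either: the formula is valid only for polar angles in $[0,\pi]$, because $P_n^k(\cos\phi)$ carries the factor $(1-\cos^2\phi)^{k/2}=|\sin\phi|^k$, which does not continue to $\sin^k\phi$ at negative angles when $k$ is odd.) Fortunately the discrepancy lives entirely in the $k\neq0$ terms, which are annihilated by the integration over $\theta$ or $\alpha$; so the ``furthermore'' display --- the only part of Lemma~\ref{Lem-1} used in the proof of Theorem~\ref{Thm-Main} --- is correct as stated, and your derivation of it goes through verbatim. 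But the first displayed identity of the lemma cannot be proved as written, by your route or the paper's, because without the factor $(-1)^k$ it is false.
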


\begin{proof}
Notice that $g(\cos\hat\phi)=g(\hat z)$.  As stated above, the Legendre polynomials \(P_i(x), i = 0,1,2,\ldots ,\) form a complete orthogonal set in the space \(L^2([-1,1])\) with the properties \(P_i\) has degree \(i\) and \(P_i(1)=1\) \cite{BatemanII}. 

Expanding $g(\hat z)$ into its Legendre series gives
\begin{align}
g(\hat z)=\sum_{n=0}^\infty c_{n}P_{n}(\hat z),
\label{eq:f-legendre}
\end{align}
where
\[c_{n}=\frac{\int_{-1}^1 g (\hat z)P_{n}(\hat z)d\hat z}{\int_{-1}^1P_{n}(\hat z)^2d\hat z}=\frac{2n+1}{2}\int_{-1}^1 g (\hat z)P_{n}(\hat z)d\hat z\]
and $P_{n}$ is the $n$-th Legendre polynomial.  The series naturally converges in \(L^2\) and we will interpret the equal sign in Equation~(\ref{eq:f-legendre}) as equality in \(L^2\).

Changing back to spherical coordinates yields
\begin{align}
g(\cos\hat\phi)=\sum_{n=0}^\infty c_{n}P_{n}(\cos\hat\phi).
\label{eq:f-spherical-coord-legendre}
\end{align}

The addition formula for spherical harmonics \cite{Edmonds1974,Whittaker1990} says that
\[P_n(\cos\omega)=\frac{4\pi}{2n+1}\sum_{k=-n}^n Y_{n}^k(\theta,\phi)Y_{n}^k(\theta',\phi')^*\]
where 
\begin{align}
\cos\omega=\cos\phi\cos\phi'+\sin\phi\sin\phi'\cos(\theta-\theta')
\label{eq:cos-omega}
\end{align}
and $Y_{n}^k(\theta,\phi)=N_{n,k}e^{ik\theta}P_{n}^k(\cos\phi)$ is the $n-k$ spherical harmonic. 
Recall that 
\[\cos\hat\phi=\cos\beta\cos\phi-\sin\beta\sin\phi\cos(\alpha+\theta)\]
which can be written in the form of Equation \eqref{eq:cos-omega} by letting $\beta=-\tilde\beta$ and $\alpha=-\tilde\alpha$.
Then for any $n$ 
\begin{align*}
P_{n}(\cos\hat\phi)&=\frac{4\pi}{2n+1}\sum_{k=-n}^n Y_{n}^k(\theta,\phi)Y_{n}^k(\tilde\alpha,\tilde\beta)^*\\
&=\frac{4\pi}{2n+1}\sum_{k=-n}^n Y_{n}^k(\theta,\phi)Y_{n}^k(\alpha,\beta)
\end{align*}
because $Y_{n}^k(\alpha,\cdot)=Y_{n}^k(-\alpha,\cdot)^*$ and $Y_{n}^k$ is even in the second argument. Substituting the above into Equation \eqref{eq:f-spherical-coord-legendre} yields
\[g(\cos\hat\phi)=\sum_{n=0}^\infty \frac{4\pi c_n}{2n+1} \left(\sum_{k=-n}^{n} Y_{n}^k(\alpha,\beta) Y_{n}^{k}(\theta,\phi)\right).\]
Writing $g(\cos\hat\phi)=g(\cos\beta\cos\phi-\sin\beta\sin\phi\cos(\alpha+\theta))=g(\alpha,\beta,\theta,\phi)$ gives the formula from the statement of the theorem.

To prove that
\[\int_{0}^{2\pi}g(\alpha,\beta,\theta,\phi)d\theta=\int_{0}^{2\pi}g(\alpha,\beta,\theta,\phi)d\alpha=2\pi\sum_{n=0}^{\infty}P_{n}(\cos\beta)P_{n}(\cos\phi)\]
notice that 
\begin{align*}
\int_{0}^{2\pi}\sum_{n=0}^\infty \tilde c_n\sum_{k=-n}^n Y_{n}^k(\theta,\phi)Y_{n}^k(\alpha,\beta)d\theta&=\sum_{n=0}^{\infty} \tilde c_n\int_{0}^{2\pi}\sum_{k=-n}^n Y_{n}^k(\theta,\phi)Y_{n}^k(\alpha,\beta)d\theta
\end{align*}
because the function is absolutely integrable over a finite interval.  We see that
\begin{align*}
\int_{0}^{2\pi}\sum_{k=-n}^n Y_{n}^k(\theta,\phi)Y_{n}^k(\alpha,\beta)d\theta&=\sum_{k=-n}^n Y_{n}^k(\alpha,\beta)N_{n,k}P_{n}^k(\cos\phi)\int_{0}^{2\pi}e^{ik\theta}d\theta\\
&=\sum_{k=-n}^n Y_{n}^k(\alpha,\beta)N_{n,k}P_{n}^k(\cos\phi)(2\pi\delta_{k,0})
\end{align*}
where $\delta_{k,0}$ is the Kronecker Delta function indicating that the integral is zero except when $k=0$.  Then
\begin {align*}
\int_{0}^{2\pi}g(\alpha,\beta,\theta,\phi)d\theta&=2\pi\sum_{n=0}^{\infty}  \tilde c_n \left(N_{n,0}\right)^2P_{n}(\cos\beta)P_{n}(\cos\phi)\\
&=\sum_{n=0}^{\infty} \left(\frac{2n+1}{2}\right)\tilde c_nP_{n}(\cos\beta)P_{n}(\cos\phi)\\
&=\sum_{n=0}^{\infty} \tilde b_nP_{n}(\cos\beta)P_{n}(\cos\phi)
\end{align*}
Integrating in $\alpha$ yields the same result.
\end{proof}

\section{Integral of $x^{2k}\sqrt{(1-x^2)}$}  

The second lemma is instrumental for computing the coefficients of the Legendre series for the insolation distribution function.

\begin{lem}  \label{Lem-2}
For any non-negative integer \(k\),
\[
  \int_{-1}^1\sqrt{1-x^2}\,x^{2k}\,dx =
    (-1)^k\pi\choose{1/2}{k+1}.
\]
\end{lem}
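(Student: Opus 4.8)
The plan is to prove the identity by deriving a first-order recursion in $k$ for the left-hand side and then checking that the right-hand side satisfies the same recursion with the same initial value. Write $I_k=\int_{-1}^1\sqrt{1-x^2}\,x^{2k}\,dx$ for the quantity to be evaluated. The first step is to integrate by parts for $k\ge 1$, taking $u=x^{2k-1}$ and $dv=x\sqrt{1-x^2}\,dx$, so that $v=-\tfrac13(1-x^2)^{3/2}$. Because $v$ vanishes at $x=\pm1$, the boundary term drops out and one obtains
\[
  I_k=\frac{2k-1}{3}\int_{-1}^1(1-x^2)^{3/2}x^{2k-2}\,dx .
\]
Writing $(1-x^2)^{3/2}=(1-x^2)\sqrt{1-x^2}$ splits the remaining integral as $I_{k-1}-I_k$, which after rearrangement yields the clean recursion
\[
  I_k=\frac{2k-1}{2k+2}\,I_{k-1}.
\]

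Next I would dispose of the base case. Directly, $I_0=\int_{-1}^1\sqrt{1-x^2}\,dx=\pi/2$, which agrees with the claimed formula since $(-1)^0\pi\choose{1/2}{1}=\pi\cdot\tfrac12=\pi/2$. It then remains to show that the proposed answer $a_k:=(-1)^k\pi\choose{1/2}{k+1}$ obeys the identical recursion. Using the elementary relation $\choose{r}{j+1}=\tfrac{r-j}{j+1}\choose{r}{j}$ with $r=\tfrac12$ and $j=k$ gives $\choose{1/2}{k+1}=\tfrac{1/2-k}{k+1}\choose{1/2}{k}=-\tfrac{2k-1}{2k+2}\choose{1/2}{k}$. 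The minus sign here is exactly what converts $(-1)^{k-1}$ into $(-1)^{k}$, so that $a_k=\tfrac{2k-1}{2k+2}\,a_{k-1}$. Since $I_k$ and $a_k$ share the same initial value and the same recursion, induction on $k$ finishes the proof.

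The only delicate point is the integration-by-parts step: the choice $dv=x\sqrt{1-x^2}\,dx$ must be made precisely so that its antiderivative $(1-x^2)^{3/2}$ simultaneously kills the boundary contribution and, upon expanding one factor of $(1-x^2)$, reproduces the combination $I_{k-1}-I_k$ that lets the recursion close on itself. A less fortunate pairing (for instance peeling off $dv=\sqrt{1-x^2}\,dx$) produces integrals that do not fold back into the $I_k$ family, so setting up the parts correctly is where the real work lies; everything after that is bookkeeping with the generalized binomial coefficient already appearing in the theorem.

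As an alternative that avoids the recursion, the substitution $x=\sin\theta$ turns the integral into $2\int_0^{\pi/2}\cos^2\theta\,\sin^{2k}\theta\,d\theta$, a Beta integral equal to $\Gamma(k+\tfrac12)\Gamma(\tfrac32)/\Gamma(k+2)$; rewriting the half-integer Gamma values through $\Gamma(k+\tfrac12)=\tfrac{(2k)!}{4^k k!}\sqrt\pi$ and comparing with a direct expansion of $\choose{1/2}{k+1}$ gives the same value $\pi(2k)!/\bigl(2^{2k+1}k!\,(k+1)!\bigr)$. I prefer the recursion route, however, since it stays entirely within elementary manipulations and never leaves the language of the binomial coefficient used in the statement.
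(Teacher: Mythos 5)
Your proof is correct, but it takes a genuinely different route from the paper. The paper evaluates the integral with complex analysis: it interprets $\int_{-1}^1\sqrt{1-x^2}\,x^{2k}\,dx$ as (half of) a contour integral of $\sqrt{1-z^2}\,z^{2k}$ around the branch cut $[-1,1]$, equates that with the residue at infinity, and reads the residue off the binomial series $\sqrt{1-z^2}=\sum_n(-1)^n\binom{1/2}{n}z^{2n}$ as the coefficient $(-1)^{k+1}\binom{1/2}{k+1}$. Your argument instead stays entirely within real calculus: integration by parts yields the recursion $I_k=\tfrac{2k-1}{2k+2}I_{k-1}$, the base case $I_0=\pi/2$ is the area of the half-disc, and the identity $\binom{r}{j+1}=\tfrac{r-j}{j+1}\binom{r}{j}$ shows the claimed right-hand side satisfies the same recursion, so induction finishes it. All of your steps check out (as does the Beta-function alternative you sketch, which is essentially a third proof). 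The trade-off: the paper's residue computation is a \emph{derivation} --- the generalized binomial coefficient $\binom{1/2}{k+1}$ emerges on its own, explaining the shape of the answer --- whereas your recursion argument is elementary (no branch cuts, no residue at infinity) but is a \emph{verification}, requiring the closed form to be known in advance; since the lemma states the closed form, that is perfectly adequate here.
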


\begin{figure}
\begin{center}
\includegraphics[width=.75\textwidth]{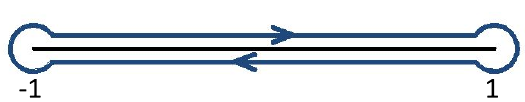}
\end{center}
\caption{The contour, $C$, used in the calculation of the integral in Lemma \ref{Lem-2}.}
\label{FIG-contour}
\end{figure}

\begin{proof}
Let
\[a_{2k}=\int_{-1}^1\sqrt{1-x^2}x^{2k}dx.\]
We compute $a_{2k}$ via the integral around the contour $C$ shown in Figure \ref{FIG-contour}.  The integral around $C$ is given by the residue at infinity of the integrand.  Namely
\begin{align*}
2a_{2k}&=2\int_{-1}^1\sqrt{1-x^2}x^{2k}dx\\
&=\int_{C}\sqrt{1-{z^2}}{z^{2k}}dz\\
&=-2\pi i\text{Res}( \sqrt{1-z^2}z^{2k},\infty)\\
&=2\pi i\text{Res}\left( \sqrt{1-\frac{1}{z^2}}\frac{1}{z^{2k}}\frac{1}{z^2},0\right)\\
&=-2\pi \text{Res}\left( \frac{\sqrt{1-{z^2}}}{z^{2k+3}},0\right)
\end{align*}
The series expansion of $\sqrt{1-z^2}$ is given by
\[\sqrt{1-z^2}=\sum_{n=0}^\infty \choose{1/2}{ n} (-z^2)^n=\sum_{n=0}^\infty (-1)^n \choose{1/2}{ n}z^{2n}.\]
were $\choose{1/2}{n}$ is a standard generalized binomial coefficient.  Multiplying through by $1/(z^{2k+3})$ yields
\[\frac{\sqrt{1-z^2}}{z^{2k+3}}=\sum_{n=0}^\infty (-1)^n \choose{1/2}{ n}z^{2n-2k-3}.\]
Then we calculate the residue as
\[\text{Res}\left( \frac{\sqrt{1-{z^2}}}{z^{2k+3}},0\right)=(-1)^{k+1}\choose{1/2}{k+1}\]
which establishes the formula given in the lemma.
\end{proof}

Alternatively, one could prove the above result using the Beta function of Euler \cite{Davis1972}
\begin{align}
B(a,b)=\int_{0}^1t^{a-1}(1-t)^{b-1}dt=\frac{\Gamma(a)\Gamma(b)}{\Gamma(a+b)},
\end{align}
with $a=k+1/2$ and $b=3/2$, as was noted by an anonymous reviewer. Using the Beta function results in a similarly succinct proof.

\section{Proof of Theorem \ref{Thm-Main}} \label{Proof-Main-Thm}
We begin with some necessary notation.  As was shown in McGehee and Lehman \cite{McGehee2012}, the ecliptic coordinations on the unit sphere are $(\hat x,\hat y, \hat z)\in S^2\subset\mathbb{R}^3$ where the $\hat z$-axis is perpendicular to the plane of the ecliptic, and the $\hat x$-axis ofter agrees with the major axis of the orbital ellipse.  In the planet-centric coordinates  $(x,y,z)\in S^2\subset\mathbb{R}^3$, the $z$-axis is the axis of planetary rotation and the $x$-axis follows the precession angle.  We note that these coordinates assume zero obliquity and zero precession for the planet.  In order to account for nonzero obliquity or precession, we rotate these coordinates first by the obliquity angle, $\beta$, then the precession angle, $\rho$, given by the rotation matrices $R_1(\beta)$ and $R_2(\rho)$ from Section \ref{sec:Theorem}.  If $\omega=2\pi/P$ where $P$ is the period of the planet's rotation about its axis then the rotation matrix multiplications
\[R(\rho,\beta,\omega t)=R_2(\rho)R_1(\beta)R_2(\omega t)
\]
account for the rotation of the planet, obliquity, and procession angles.  We can then relate the ecliptic coordinates to the planet-centric coordinates, namely
\[\left[\begin{array}{c}
\hat x \\
\hat y\\
\hat z\\
\end{array}\right]=R(\rho,\beta,\omega t)\left[\begin{array}{c}
x\\
y\\
z\\
\end{array}\right].\]
 We are now ready to prove the theorem.

\begin{proof} (Theorem \ref{Thm-Main})
McGehee and Lehman \cite{McGehee2012} showed that the annual average insolation at a latitude-longitude point $(\overline\phi,\theta)$ on the surface of a non-rotating planet with obliquity $\beta$ is proportional to
\begin{align}
\sqrt{1-(\sin\beta\cos\overline\phi\cos\theta-\cos\beta\sin\overline\phi)^2}
\label{EQ-insol-non-rotating}
\end{align}
where $\beta$ is the planet's obliquity.  We would like to include the rotation rate as well.  Following McGehee and Lehman's derivation, we see that the annual average insolation at a latitude-longitude point $(\overline\phi,\theta)$ at a particular time of day $t$ is given by
\begin{align}
I(\phi,\theta)=\sqrt{1-(\sin\beta\sin\phi\cos(\theta+\omega t)-\cos\beta\cos\phi)^2}
\label{EQ-insol-rotating}
\end{align}
where $\phi$ is the co-latitude (because cosine of the latitude is sine of the co-latitude).  Recalling equation \eqref{eq:z-hat}, the quantity in equation \eqref{EQ-insol-rotating} is
 proportional to the sine of the co-latitude in the ecliptic coordinates.  Normalizing so that the total insolation is $2$ gives us the insolation distribution
\[I(\hat\phi)=\frac{2}{\pi^2}\sin\hat\phi=\frac{2}{\pi^2}\sqrt{1-\hat z^2}\]
where $\hat\phi$ is the polar angle measured with respect to a vector perpendicular to the ecliptic plane.  The factor of two maintains compatibility with the usual normalization of insolation over one hemisphere.   
Since \eqref{EQ-insol-rotating} is square integrable,  Lemma \ref{Lem-1} applies.  Note that in our preliminary notation, we saw that we must take $\alpha=\omega t$ to account for the rotation of the planet.  Then
\begin{align}
I(\hat\phi)=I(\omega t,\beta,\theta,\phi)=\sum_{n=0}^\infty \tilde c_n\left(\sum_{k=-n}^n Y_{n}^k(\omega t,\beta)Y_n ^k(\theta,\phi)\right).
\label{EQ-insol-to-integrate}
\end{align}

The annual average insolation for a rapidly rotating planet may be gotten by integrating \eqref{EQ-insol-to-integrate} over the longitude $\theta$ or time $t$.  
Application of Lemma \ref{Lem-1} in either case yields
\[s(\phi,\beta)=\sum_{n=0}^\infty A_n P_n(\cos\beta) P_n(\cos\phi)\]
where
\begin{align*}
A_n &=\pi(2n+1)\int_{-1}^1\frac{2}{\pi^2}\sqrt{1-z^2}P_n(z)dz\\ 
&=\frac{2(2n+1)}{\pi}\int_{-1}^1\sqrt{1-z^2}P_{n}(z)dz
\end{align*}
If $n$ is odd, the integral is zero. Furthermore,
the coefficients of the Legendre polynomials are well-known \cite{BatemanII}.
In particular,
\[
  P_{2n}(x) = \sum_{k=0}^n p_{2n,2k}x^{2k},
\]
where
\[
  p_{2n,2k} =
    \frac{(-1)^{n-k}}{4^n} \choose{2n}{n-k} \choose{2n+2k}{2n} .
\]
Therefore, we can write
\begin{align*}
A_{2n} &=\frac{2(4n+1)}{\pi}\int_{-1}^1\sqrt{1-z^2}P_{2n}(z)dz\\ 
&=\frac{2(4n+1)}{\pi}\int_{-1}^1\sqrt{1-z^2}\left(\sum_{k=0}^n p_{2n,2k}z^{2k}\right)dz\\ 
&=\frac{2(4n+1)}{\pi}\sum_{k=0}^n p_{2n,2k} \int_{-1}^1\sqrt{1-z^2}z^{2k}dz\\
&=\frac{2(4n+1)}{\pi}\sum_{k=0}^n  \frac{(-1)^{n-k}}{4^n} \choose{2n}{n-k} \choose{2n+2k}{2n}  \int_{-1}^1\sqrt{1-z^2}z^{2k}dz
\end{align*}
Applying Lemma \ref{Lem-2} gives
\begin{align*}
A_{2n}&=\frac{(4n+1)(-1)^{n}}{2^{2n-1}} \sum_{k=0}^n  \choose{2n}{n-k} \choose{2n+2k}{2n} \choose{1/2}{k+1}.
\end{align*}

Finally, letting  $\eta=\sin\overline\phi=\cos\phi$ be the sine of the latitude (or cosine of the co-latitude) and $\zeta=\cos\beta$ be cosine of the obliquity we have
\begin{align*}
  s(\eta,\zeta) = \sum_{n=0}^\infty A_{2n}P_{2n}(\zeta)P_{2n}(\eta) ,
\end{align*}
which proves the formula.
\end{proof}

\section{Convergence of the Partial Sums}
\label{Section-Conv}

For a fixed obliquity, we can compute the convergence of the approximation in $L^2([-1,1])$  as well as $\mathcal C^0([-1,1])$. Convergence is slow in both norms and can also be quite slow pointwise at the poles for small obliquity.  This slow convergence is a contributing factor in our recommendation in Section \ref{Section-Comp} to use the Legendre approximation only up to the sixth degree in modeling scenarios.

Although convergence in $L^2$ is the natural space to consider from an analysis perspective, convergence in $\mathcal C^0$ may be more appropriate from a modeling perspective.  For example, because one needs to approximate $s(\eta,\zeta)$ in the Budyko-Widiasih energy balance model and modelers using that model are concerned with the qualitative behavior of solutions, convergence in the $\mathcal C^0$ norm may better capture qualitative differences between the approximations of different degrees.

\begin{figure}
\begin{center}
\includegraphics[width=.7\textwidth]{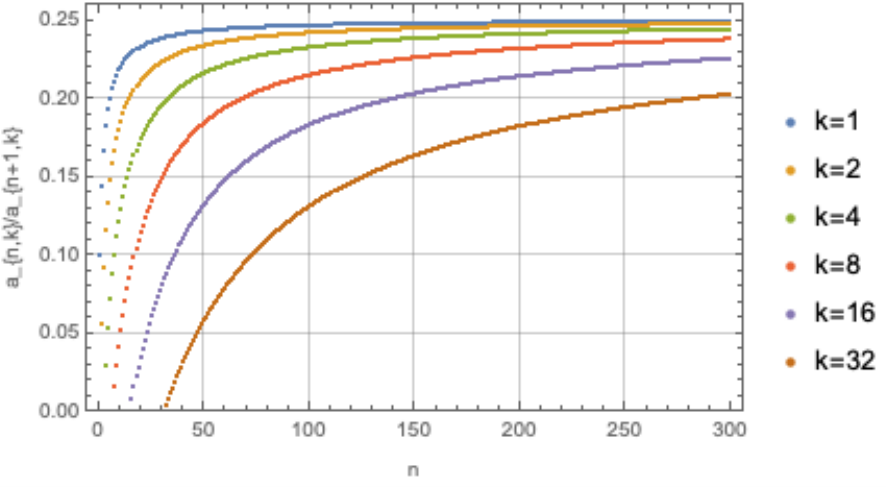}
\caption{The ratio $a_{n,k}/a_{n+1,k}$ for several $k's$ and $n<300$.  This shows that as $n$ gets large $a_{n+1,k}\approx4a_{n,k}$ and that when $k$ and $n$ are of comparable size, $a_{n+1,k}\gg a_{n,k}$. }
\label{FIG-ank}
\end{center}
\end{figure}

Below we present bounds on the convergence rate of the Legendre approximations.  We note that for $n\geq8$, great care must be taken in computing the coefficients $A_{2n}$ because while $A_{2n}\to0$ approximately like $-(2n)^{-2},$ the alternating terms that comprise $A_{2n}$, $a_{n,k}=\choose{2n}{n-k} \choose{2n+2k}{2n} \choose{1/2}{k+1}$, grow approximately like $a_{n+1,k}=4a_{n,k}$ for large $n$. We show the ratio of $a_{n,k}$ and $a_{n+1,k}$ in Figure \ref{FIG-ank}.  In this figure we see that for smaller $n$, $a_{n+1,k}$ is much larger than $4a_{n,k}$. The growth of $a_{n,k}$ and the resulting difficulty in computation is another major factor in the recommendation to use the degree six approximation in Section \ref{Section-Comp}.

Let $\sigma_{2N}(\eta,\zeta)$ denote the partial sum which is degree $2N$, i.e.
\begin{align}
\sigma_{2N}(\eta,\zeta)=\sum_{n=0}^{N}A_{2n}P_{2n}(\zeta)P_{2n}(\eta).
\end{align}
In the following subsections we demonstrate that convergence of these partial sums to the insolation function for $\zeta\not=\pm1$ is no slower than  $N^{-1}$ in $L^2$ and $N^{-1/2}$ in $\mathcal C^0$.  When $\zeta=\pm1$, the convergence is no slower than $N^{-1/2}$ in both $L^2$ and $\mathcal C^0$.

\subsection{Convergence in $L^2([-1,1])$}
\label{subsection:L2-convergence}

For convergence in $L^2([-1,1])$, we separate our argument into two parts, the first where $\zeta\not=\pm1$ and the second where $\zeta=\pm1$.  This separation is due to the fact that $s(\eta,\pm1)$ is not differentiable at the poles and, as a result, arguments and convergence rates differ.

\begin{prop}
For large $N$ and $\zeta\not=\pm1$, there exist constants $C_\zeta$ and $K_\zeta$ so that the $L^2$ difference of the approximation $\sigma_{2N}(\eta,\zeta)$ to the true distribution $s(\eta,\zeta)$ is bounded by
\begin{align}
\frac{C_\zeta}{N^{7/2}}\leq \|s(\eta,\zeta)-\sigma_{2N}(\eta,\zeta)\|_{L^2([-1,1])}\leq \frac{K_\zeta}{N}.
\end{align}
\end{prop}

\begin{proof}
It is routine to show that the convergence of the Legendre series approximations is at least order $N^{-k}$ when the $k$th derivative $\partial^k s(\eta,\zeta)/\partial\eta^k$ is in $L^2$, i.e. that for fixed $k$ and $\zeta$, the $L^2$ error is approximately
\begin{align}
\|s(\eta,\zeta)-\sigma_{2N}(\eta,\zeta)\|_{L^2([-1,1])}\approx\frac{C_\zeta}{(2N)^{-k}}\left\|\frac{\partial^k s(\eta,\zeta)}{\partial\eta^k}\right\|_{L^2([-1,1])}
\end{align}
for large $N$.  For $\zeta\not=\pm 1$, the derivative $\partial s(\eta,\zeta)/\partial\eta$ is bounded and thus in $L^2([-1,1])$, meaning convergence is at least $N^{-1}$.  

For $\zeta\not=\pm1$ the asymptotic behavior of the Legendre polynomials is given  by Szeg\"o \cite{Szego1975} (Theorem 8.21.2) as
\begin{align}
P_{k}(\cos\theta)=\frac{2}{\sqrt{2\pi k\sin\theta}}\cos\biggl(\left(k+\frac{1}{2}\right)\theta-\frac{\pi}{4}\biggr)+\mathcal O(k^{-3/2}).
\end{align}
In Figure \ref{FIG-P2n}, we see that for some values of $\zeta$, the second term in the asymptotic expansion of $P_{2n}(\zeta)$ cannot be neglected.  Bounding the Legendre polynomials as $C_\zeta(2n)^{-3/2}\leq|P_{2n}(\zeta)|$ for some constant $C_\zeta$ yields the estimate 
\begin{align}
2\sum_{n=N+1}^{\infty}\frac{|A_{2n}P_{2n}(\zeta)|^2}{4n+1}&\geq(C_\zeta)^2\sum_{n=N+1}^{\infty}\frac{1}{(2n)^8}
\end{align}
This suggests that the asymptotic behavior as $N\to\infty$ is constrained by the inequalities
\begin{align}
\|s(\eta,\zeta)-\sigma_{2N}(\eta,\zeta)\|_{L^2}\geq \frac{C_\zeta}{64(N+1)^{7/2}}
\end{align}
so that convergence is not more than $N^{-7/2}$.
\begin{figure}
\begin{center}
\includegraphics[width=.45\textwidth]{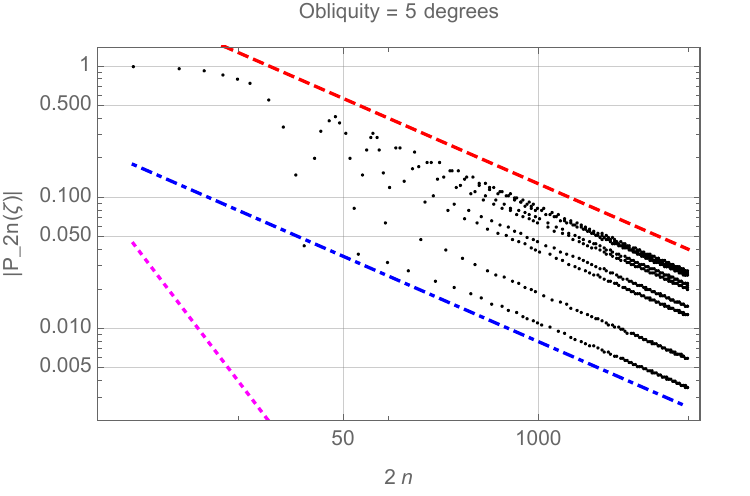}
\includegraphics[width=.45\textwidth]{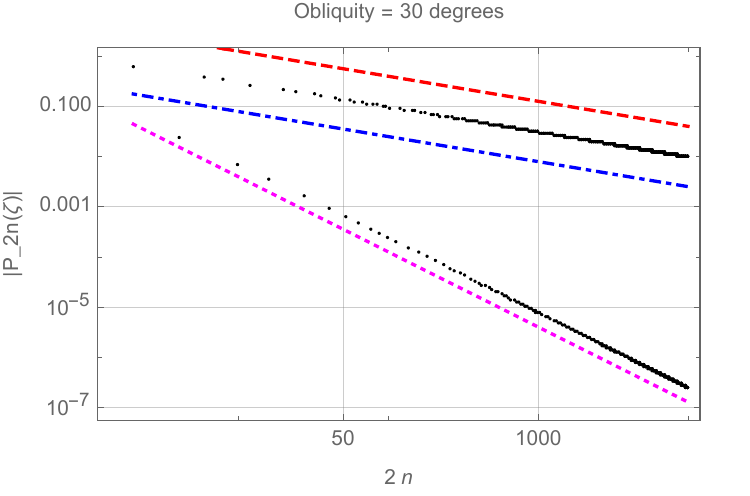}\\
\includegraphics[width=.45\textwidth]{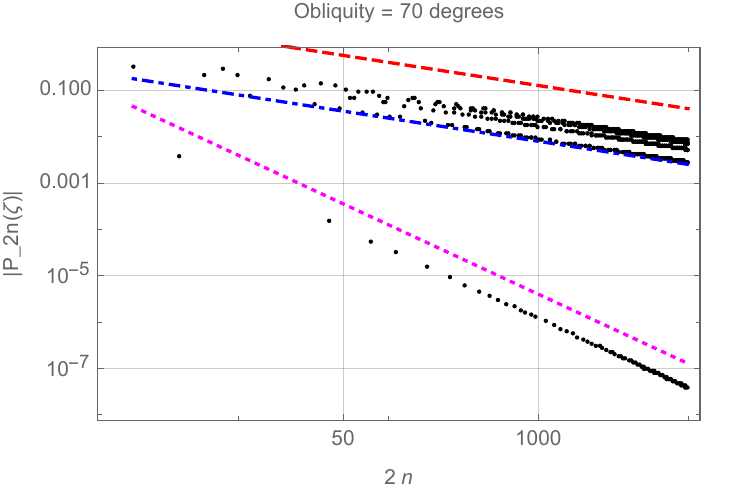}
\includegraphics[width=.45\textwidth]{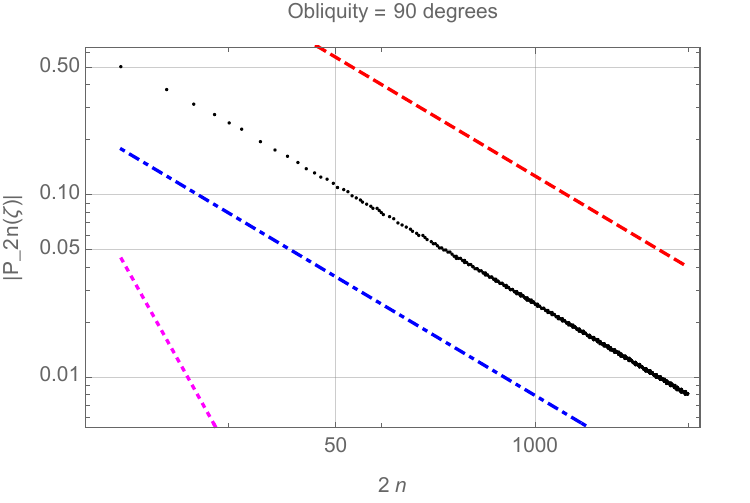}
\caption{The first 1000 values of the absolute value of the Legendre polynomials $|P_{2n}(\zeta)|$ for four values of cosine of obliquity (black dots) bounded by the curves $4(2n)^{-1/2}$ (red, dashed), $(2n)^{-1/2}/4$ (blue, dot-dashed), and $(2n)^{-3/2}/8$ (magenta, dotted). }
\label{FIG-P2n}
\end{center}
\end{figure}
\end{proof}

\begin{prop}
For large $N$ and $\zeta=\pm1$, there exist constants $C_\zeta$ and $K_\zeta$ so that the $L^2$ difference of the approximation $\sigma_{2N}(\eta,\pm1)$ to the true distribution $s(\eta,\pm1)$ is bounded by
\begin{align}
\frac{C_\zeta}{N^2}\leq \|s(\eta,\pm1)-\sigma_{2N}(\eta,\pm1)\|_{L^2([-1,1])}\leq \frac{K_\zeta}{\sqrt{N}}.
\end{align}
\end{prop}

\begin{proof}
For $\zeta=\pm 1$ the derivative with respect to sine of the latitude is not in $L^2$. In the Section~\ref{subsection:C0-convergence}, we show that
\begin{align}
|s(\eta,\zeta)-\sigma_{2N}(\eta,\zeta)|\leq\frac{1}{\sqrt{2N}}\left(\int_{-1}^1(1-\eta^2)\left(\frac{\partial s(\eta,\zeta)}{\partial\eta}\right)^2d\eta\right)^{1/2}.
\end{align}
Since the above integral is equal to 2 when $\zeta=\pm 1$, we see that convergence of the Legendre series in $L^2$ is no slower than $N^{-1/2}$ for large $N$.

To find an approximate lower bound for the convergence rate, we instead use the completeness of the Legendre polynomials in $L^2([-1,1])$ and the decay rate of $A_{2n}$ and $P_{2n}(\zeta)$ to estimate the error. Because the Legendre polynomials form a complete orthogonal set in $L^2$ we have a generalization of Parseval's identity
\begin{align}
\|s(\eta,\zeta)-\sigma_{2N}(\eta,\zeta)\|_{L^2}^2=\sum_{n=N+1}^{\infty}\frac{2}{4n+1}A_{2n}^2(P_{2n}(\zeta))^2.
\end{align}
For $n$ up to at least $2^{13}$  we have $2(2n)^{-2}\leq|A_{2n}|\leq4(2n)^{-2}$ (see Figure \ref{FIG-A2n} for a plot of the first 2500 $A_{2n}$).  Using the fact that $|P_{2n}(\pm 1)|=1$ and assuming that $2(2n)^{-2}\leq|A_{2n}|\leq4(2n)^{-2}$ holds for all $n$ we have
\begin{align}
 2\sum_{n=N+1}^{\infty}\frac{|A_{2n}P_{2n}(\pm1)|^2}{4n+1}\geq \frac{1}{2}\sum_{n=N+1}^{\infty}\frac{1}{(2n)^5}
\end{align}
which implies
\begin{align}
\|s(\eta,\pm1)-\sigma_{2N}(\eta,\pm1)\|_{L^2}\geq \frac{1}{8(N+1)^{2}}
\end{align}
so that the asymptotic convergence rate is no faster than $N^{-2}$.
\begin{figure}
\begin{center}
\includegraphics[width=.7\textwidth]{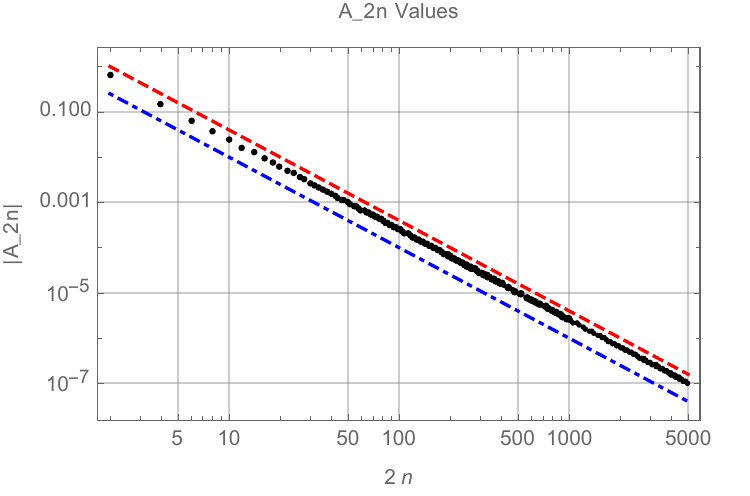}
\caption{The first 2500 values of $|A_{2n}|$ (black dots) bounded by the curves $4(2n)^{-2}$ (red, dashed) and $(2n)^{-2}$ (blue, dot-dashed).}
\label{FIG-A2n}
\end{center}
\end{figure}
\end{proof}

Numerical computations when $\zeta=\pm1$ suggest that the convergence rate is slightly less than 1 (see Table \ref{TAB-L2-0}).

\begin{table}
\caption{Approximate convergence rates in the $L^2$ norm for the approximation to the insolation on a planet with no axial tilt.}
\begin{center}
    \begin{tabular}{ | c | r | p{3cm} |}
    \hline
    $2N$ &$ \|s(\eta,1)-\sigma_{2N}(\eta,1)\|_{L^2}$ & Approx. order of convergence \\ \hline \hline
    2 & $7.27019\cdot 10^{-2}$ &  \\ \hline
    4 & $2.97384\cdot 10^{-2}$ & $1.28967$  \\ \hline
    8 & $1.02277\cdot 10^{-2}$ & $1.53984$ \\ \hline
    16 & $3.36645\cdot 10^{-3}$ & $1.60319 $  \\ \hline
    32 & $1.32593\cdot 10^{-3}$ & $1.34423$ \\ \hline
    64 & $6.32304\cdot 10^{-4}$ & $1.06831$ \\ \hline
    128 & $3.12467\cdot 10^{-4}$ & $1.01691 $  \\ \hline
    256 & $1.56444\cdot 10^{-4}$ & $9.98062 \cdot 10^{-1} $  \\ \hline
    512 & $7.84126\cdot 10^{-5}$ & $9.96485 \cdot 10^{-1}$  \\ \hline
    \end{tabular}
\end{center}
\label{TAB-L2-0}
\end{table}


\subsection{Convergence in $\mathcal C^0([-1,1])$}
\label{subsection:C0-convergence}

For modelers, the $L^2$ norm may not be the ideal measure for goodness-of-fit of the approximation.  Instead, the $\mathcal C^0$ norm may be more appropriate. Take the norm in $\mathcal C^0([-1,1])$ to be
\begin{align}
\|f\|_{\mathcal C^0([-1,1])}=\sup_{x\in[-1,1]}|f(x)|.
\end{align} 

\begin{prop}
For large $N$ and $\zeta$ fixed, there exists a constant $C_\zeta$ so that the $\mathcal C^0$ difference of the approximation $\sigma_{2N}(\eta,\pm1)$ to the true distribution $s(\eta,\pm1)$ is bounded by
\begin{align}
\|s(\eta,\pm1)-\sigma_{2N}(\eta,\pm1)\|_{\mathcal C^0([-1,1])}\leq \frac{C_\zeta}{\sqrt{N}}.
\end{align}
\end{prop}

As mentioned in the previous section we will bound the error of the approximation with 
\begin{align}
|s(\eta,\zeta)-\sigma_{2N}(\eta,\zeta)|\leq\frac{1}{\sqrt{2N}}\left(\int_{-1}^1(1-\eta^2)\left(\frac{\partial s(\eta,\zeta)}{\partial\eta}\right)^2d\eta\right)^{1/2}
\end{align}
with the standard use of integration by parts twice and application of the Cauchy-Schwarz inequality.
First write
\begin{align}
\sigma_{2N}(\eta,\zeta)&=\sum_{n=0}^N A_{2n}P_{2n}(\zeta)P_{2n}(\eta)=\int_{-1}^1\left[\sum_{n=0}^N\left(\frac{4n+1}{2}\right)P_{2n}(x)P_{2n}(\eta)\right]s(x,\zeta)dx.
\end{align}
Using
\begin{align}
-\frac{d}{dx}(1-x^2)\frac{d}{dx}P_n(x)=n(n+1)P_n(x)
\end{align}
and integration by parts twice, we can write
\begin{align}
\sigma_{2N}(\eta,\zeta)&=\int_{-1}^1\left[\sum_{n=0}^N\left(\frac{4n+1}{4n(2n+1)}\right)P_{2n}(x)P_{2n}(\eta)\right]\left(-\frac{d}{dx}(1-x^2)\frac{d}{dx}s(x,\zeta)\right)dx.
\end{align}
Similarly,
\begin{align}
s(\eta,\zeta)&=\int_{-1}^1\left[\sum_{n=0}^\infty\left(\frac{4n+1}{4n(2n+1)}\right)P_{2n}(x)P_{2n}(\eta)\right]\left(-\frac{d}{dx}(1-x^2)\frac{d}{dx}s(x,\zeta)\right)dx
\end{align}
so that
\begin{align}
|s(\eta,\zeta)-\sigma_{2N}(\eta,\zeta)|&=\left|\int_{-1}^1\left[\sum_{n=N}^\infty\left(\frac{4n+1}{4n(2n+1)}\right)P_{2n}(x)P_{2n}(\eta)\right]\left(-\frac{d}{dx}(1-x^2)\frac{d}{dx}s(x,\zeta)\right)dx\right|.
\end{align}
Integrating by parts to move a derivative to $P_{2n}(x)$ and applying the Cauchy-Schwarz inequality allows us to write
\begin{align}
|s(\eta,\zeta)-\sigma_{2N}(\eta,\zeta)|&\leq\left(\int_{-1}^1\left[\sum_{n=N}^\infty\left(\frac{4n+1}{4n(2n+1)}\right)P_{2n}(\eta)\frac{d}{dx}P_{2n}(x)\right]^2(1-x^2)dx\right)^{1/2}\\
&\qquad\qquad\cdot\left(\int_{-1}^1(1-x^2)\left[\frac{d}{dx}s(x,\zeta)\right]^2dx\right)^{1/2}.
\end{align}
The derivatives of the Legendre polynomials are orthogonal with the weight function $(1-x^2)$
\begin{align}
\int_{-1}^1(1-x^2)\left(\frac{d}{dx}P_m(x)\right)\left(\frac{d}{dx}P_n(x)\right)dx=\frac{2n}{2n+1}\delta_{mn}.
\end{align}
Using this relationship and using 1 for the bound of the Legendre polynomials squared allows us to write
\begin{align}
|s(\eta,\zeta)-\sigma_{2N}(\eta,\zeta)|&\leq\left(\sum_{n=N+1}^\infty\left(\frac{4n+1}{4n(2n+1)}\right)^2\frac{2n}{2n+1}\right)^{1/2}\cdot\left(\int_{-1}^1(1-x^2)\left[\frac{d}{dx}s(x,\zeta)\right]^2dx\right)^{1/2}\\
&\leq2\left(\sum_{n=N+1}^\infty\frac{1}{(2n)^2}\right)^{1/2}\cdot\left(\int_{-1}^1(1-x^2)\left[\frac{d}{dx}s(x,\zeta)\right]^2dx\right)^{1/2}\\
&\leq \frac{1}{\sqrt{2N}}\left(\int_{-1}^1(1-x^2)\left[\frac{d}{dx}s(x,\zeta)\right]^2dx\right)^{1/2}.
\end{align}
The integral is finite for any $\zeta\in[-1,1]$, so the convergence rate in $\mathcal C^0$ is no slower than $N^{-1/2}$.  

\begin{cor}
For fixed $\zeta$, the convergence of $\sigma_{2N}(\eta,\zeta)$ to $s(\eta,\zeta)$ is uniform.
\end{cor}

\begin{proof}
From the above argument, we have 
\begin{align}
|s(\eta,\zeta)-\sigma_{2N}(\eta,\zeta)|&\leq \frac{1}{\sqrt{2N}}\left(\int_{-1}^1(1-x^2)\left[\frac{d}{dx}s(x,\zeta)\right]^2dx\right)^{1/2}
\end{align}
and the integral is finite for any $\zeta\in[-1,1]$.
\end{proof}

\subsection{Approximations up to Degree Eight}

In the next section, we assert that the sixth degree Legendre approximation is the preferable degree approximation to choose when modeling annual average insolation on rapidly rotating planets.  Here we compare the first four Legendre series approximations but showing their qualitative differences as well as error in the $L^2$ and $\mathcal C^0$ norms as a function of obliquity.

The biggest difference between the approximations is how well they capture the qualitative aspects of the insolation distribution.  A second order approximation is sufficient for Earth's obliquity ($\beta\approx23.4^\circ$), however other obliquity angles produce qualitatively different insolation distributions and the second order approximation is no longer sufficient to capture an accurate insolation distribution.  In particular, for obliquities between $47^\circ$ and $63^\circ$ (and $117^\circ$ and $133^\circ$), the insolation has a characteristic `W' shape.  This `W' shape is first captured for all obliquities by the degree six approximation, see Figure \ref{FIG-Insol-2-4-6}.

\begin{figure}
\begin{center}
\includegraphics[width=.49\textwidth]{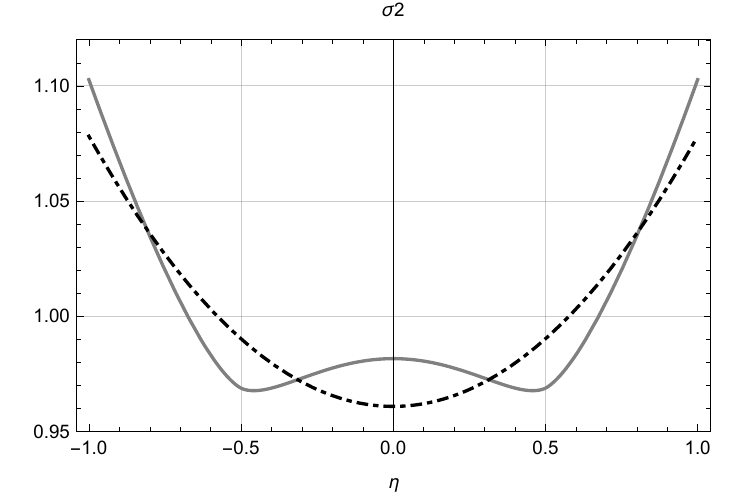}
\includegraphics[width=.49\textwidth]{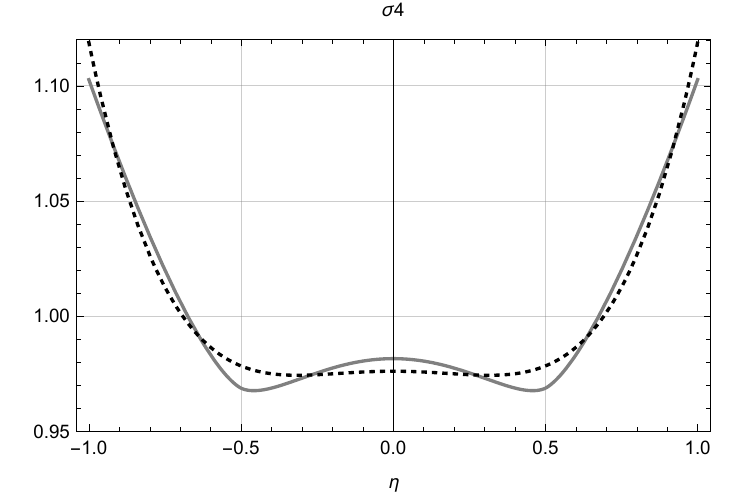}\\
\includegraphics[width=.49\textwidth]{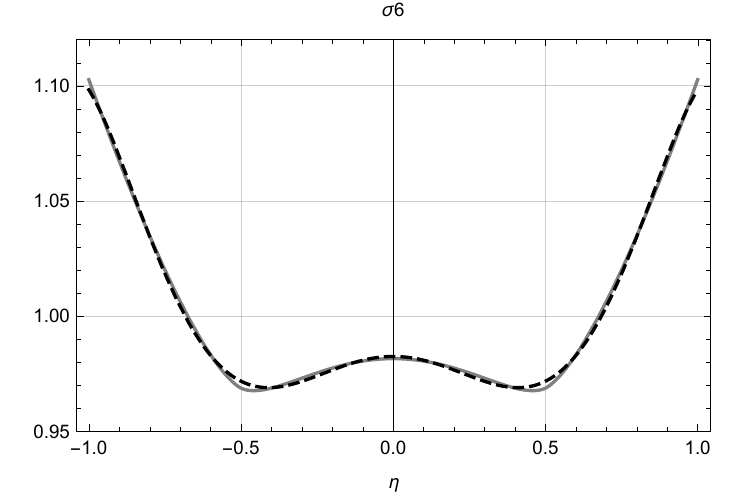}
\includegraphics[width=.49\textwidth]{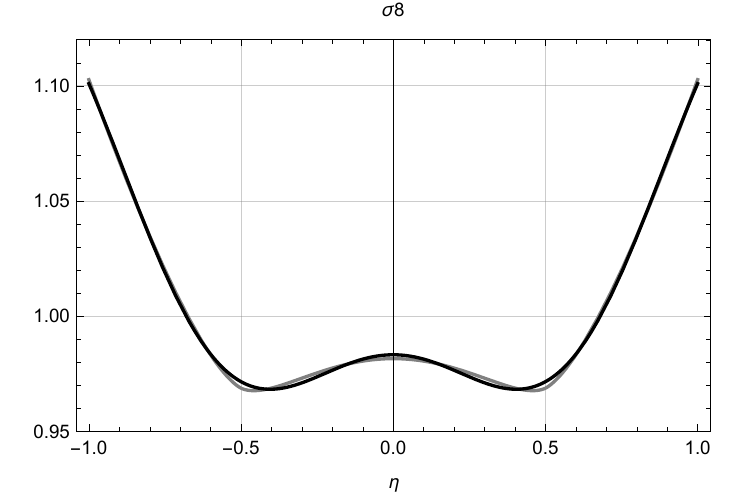}
\end{center}
\caption{The insolation distribution $s(\eta,\zeta)$ (gray) and the degree two (dot-dashed), degree four (dotted), degree six (dashed), and degree eight (solid) approximations for an obliquity angle of $120^\circ$ (approximately Pluto's obliquity). Notice that the degree six approximation is the lowest degree approximation that is able to capture the qualitative `W' shape of the distribution.}
\label{FIG-Insol-2-4-6}
\end{figure}

The $L^2$ error as a function of the obliquity is given in Figure \ref{FIG-L2-Norm-2-4-6}.  In this figure, we see that the $2N$ approximation is bounded above by the $2(N-1)$ approximation; however, it is interesting to note that at some obliquities, there is no reduction in the $L^2$ error between successive approximations and one must take an additional term in the Legendre approximation to ensure a decrease in the error.

\begin{figure}
\begin{center}
\includegraphics[width=.49\textwidth]{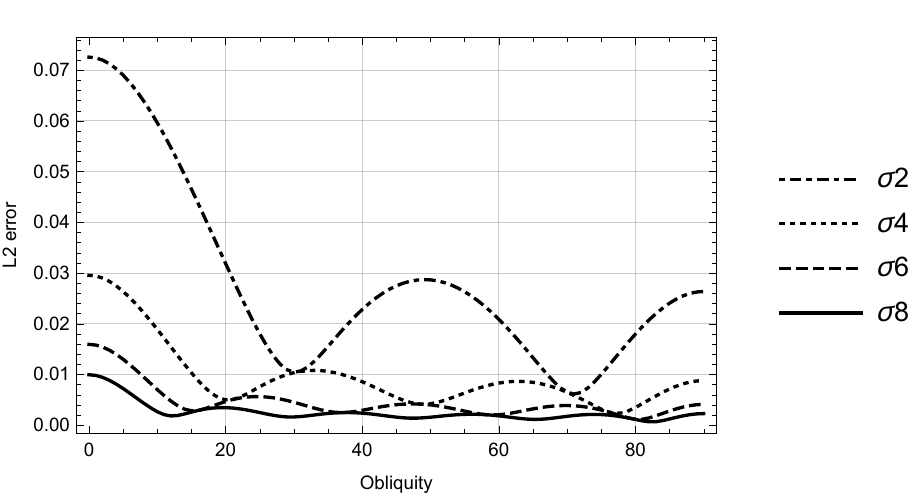}
\includegraphics[width=.49\textwidth]{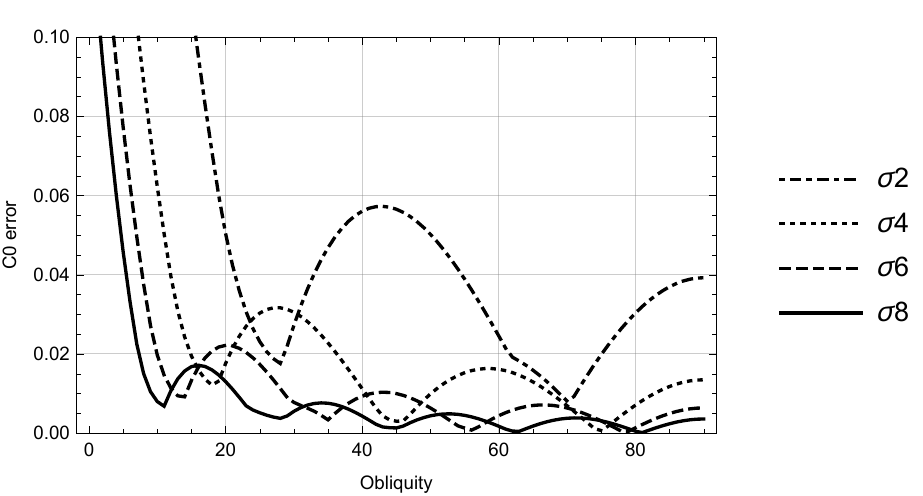}
\end{center}
\caption{The $L^2$ (left) and $\mathcal C^0$ (right) error for the degree two (dot-dashed), degree four (dotted), degree six (dashed), and degree eight (solid) approximations as a function of obliquity.  Note that these errors are symmetric about an obliquity angle of $90^\circ$. }
\label{FIG-L2-Norm-2-4-6}
\end{figure}

The $\mathcal C^0$ error as a function of the obliquity is given in Figure \ref{FIG-L2-Norm-2-4-6}.  Here the decrease in error is not monotonic for all obliquities as we increase the degree of the approximation.  For example, the degree two approximation is better than the degree four approximation for obliquity angles between $25^\circ$ and $32^\circ$.   As with the $L^2$ error, we see that one must take an additional term in the Legendre approximation to ensure a decrease in the error, although it is not clear if this is true for all $N$.


\section{Comparisons with Other Approximations}
\label{Section-Comp}

Finding the annual average insolation distribution for a planet is not a new problem, and various other approximations exist.  The sixth degree polynomial approximation, $\sigma_6$, gotten by truncating Equation \ref{EQ:Insol-Approx} at $n=3$ is given by
\begin{equation}
\begin{aligned}
\sigma_6(\eta,\beta)&=1-A_2P_2(\cos\beta)P_2(\eta)-A_4P_4(\cos\beta)P_4(\eta)-A_6P_6(\cos\beta)P_{6}(\eta)
\label{EQ-insolation}
\end{aligned}
\end{equation}
where $A_2=5/8$, $A_4=9/64$, and $A_6=65/1024$ and the $P_k$'s are the Legendre polynomials
\begin{align}
P_2(y)&=(3y^2-1)/2,\\
P_4(y)&=(35y^4-30y^2+3)/8,\\
P_6(y)&=(231y^6-315y^4+105y^2-5)/16.
\end{align}
The approximation $\sigma_6$ is preferable to other approximations in the literature because
\begin{enumerate}
\item it is the only approximation with explicit dependence on obliquity that is also continuous,
\item it is a better approximation in the $L^2$ norm than any other polynomial approximation of equal or lesser degree,
\item for obliquity angles between $7^\circ$ and $183^\circ$, $\sigma_6$ has smaller error in the $L^2$  norm than  any other approximation in the literature, 
\item for obliquity angles between $3^\circ$ and $23^\circ$ and $157^\circ$ and $187^\circ$, the error in the $\mathcal C^0$ norm for $\sigma_6$ is no worse than the same order of magnitude as the error for other approximations in the literature, 
\item for obliquity angles between $23^\circ$ and $157^\circ$, $\sigma_6$ is the best approximation in the $\mathcal C^0$ norm compared to  other approximations in the literature, 
\item $\sigma_6$ is the lowest degree polynomial that captures the qualitative distribution of insolation for all obliquities.
\end{enumerate}
The polynomial approximation $\sigma_6$ should be used instead of the integral form of the annual average insolation distribution function and other approximations in the literature because the approximation is more computationally efficient and sufficiently accurate to capture the qualitative characteristics of the actual distribution function for any obliquity.


\subsection{Approximations for Earth's Insolation Distribution}

North \cite{North1975a} explicitly gives a second degree approximation for the insolation distribution of Earth derived from orbital movements.  North gives his approximation in terms of a second degree Legendre approximation as
\[\sigma_\text{North}(\eta)=1-0.482P_2(\eta)\]
 (see equation 2 in \cite{North1975a}). The coefficient is gotten by setting $\zeta=\cos(23^\circ)$ in $A_2P_2(\zeta)$. As we saw in the previous section, the sixth degree approximation will give a better approximation in terms of error in the $L^2$ and $\mathcal C^0$ norms than this second degree Legendre approximation.
 
Chylek and Coakley \cite{Chylek1975} also compute Earth's mean annual insolation distribution from first principles.  They give their approximation as a function of sine of the latitude as a piecewise linear function broken up at intervals of $5^\circ$ in latitude.  The values that their approximation takes at these breaks are given in Table 1 of \cite{Chylek1975}. The sixth degree Legendre approximation does better than Chylek and Coakley's approximation in the $L^2$ norm but worse in the $\mathcal C^0$ norm.  It is worse in the $\mathcal C^0$ norm because of the poor approximation at the poles due to the slow convergence there.

\subsection{Approximation by Ojakangas and Stevenson}
 
The approximation given in Ojakangas and Stevenson \cite{Ojakangas1989} gives an explicit dependence on obliquity and is given by
\begin{align}
\sigma_\text{OS}(\eta,\zeta)=\begin{cases}
\frac{4\sqrt{1-\eta^2}}{\pi}&\eta>\zeta\\
\frac{4\sqrt{\arccos^2(\zeta)+\arccos^2(\eta)}}{\sqrt{2}\pi}& \eta\leq\zeta
\end{cases}
\end{align}
where $\eta$ is the sine of the latitude and $\zeta$ is cosine of the obliquity (see equations A.17 and A.18 in \cite{Ojakangas1989}).  Ojakangas and Stevenson developed this approximation to understand the insolation on Europa and note that their approximation $
\sigma_\text{OS}$ is valid only when $\zeta$ is close to 1 (i.e. the obliquity close to zero) \cite{Ojakangas1989}.  Ojakangas and Stevenson use $\arccos\zeta=3^\circ$.  The degree six approximation with $\zeta=\cos(3^\circ)$ is
\begin{align}
\sigma_6(\eta,\cos(3^\circ))\approx1-3.1\cdot 10^{-1}P_2(\eta)-5.13\cdot 10^{-2}P_4(\eta)-1.87\cdot 10^{-2}P_6(\eta).
\end{align}
We note that the Ojakangas and Stevenson approximation $\sigma_{OS}$ is continuous only when $\zeta=1$. For all other obliquities, the approximation $\sigma_{OS}$ is discontinuous at $\eta=\pm\zeta$ and the jump distance increases as the obliquity increases.

%
 
In Figure \ref{FIG-C0-Norms-Ojakangas}, we show the error of the approximations to the insolation distribution $s(\eta,\zeta)$ in the $L^\infty$ and $L^2$ norms as a function of the obliquity.  The Ojakangas and Stevenson approximation (solid line) is better than the sixth degree Legendre approximation (dashed line) in both the $L^\infty$ and $L^2$ norms for obliquities smaller than $7^\circ$.   When modeling planets with low obliquity angle, the Ojakangas and Stevenson approximation may be preferable over the degree six Legendre approximation provided the discontinuity at $\eta=\zeta$ is not an issue in the modeling framework.

\begin{figure}
\begin{center}
\includegraphics[width=.455\textwidth]{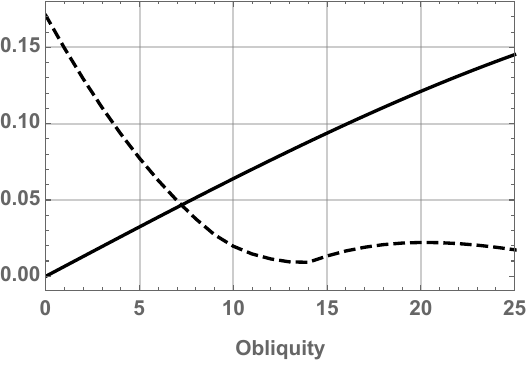}
\includegraphics[width=.45\textwidth]{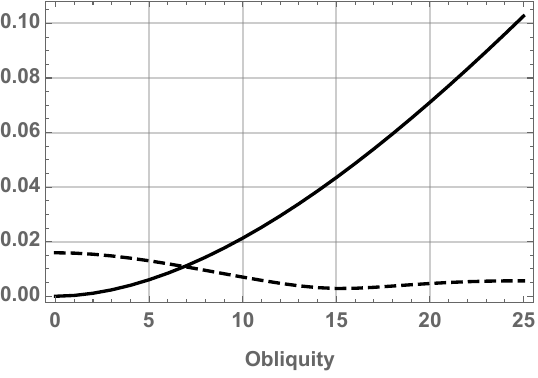}
\end{center}
\caption{The error of the approximations in the $L^\infty$ norm (left)  and the $L^2$ norm (right) as a function of the obliquity. Solid curves are the Ojaknagas and Stevenson approximation, $\sigma_{OS}(\eta,\zeta)$ and dashed curves are the sixth degree Legendre approximation, $\sigma_6(\eta,\zeta).$}
\label{FIG-C0-Norms-Ojakangas}
\end{figure}

\subsection{Sellers Equation for Insolation}

Sellers gives the daily insolation distribution as a function of latitude $\overline\phi$ and time since the northern hemisphere winter solstice is given by
\begin{align}
s_\text{Sellers}(\overline\phi,t)=\frac{\sin\overline\phi}{\pi}(\sin(\delta(t)H(\overline\phi,t))+\sin H(\overline\phi,t)\cos\delta(t))
\end{align}
where 
\begin{align}
\sin\delta(t)&=-\sin\beta\cos L(t)\\
\cos H(\overline\phi,t)&=-\cos^{-1}(\tan\overline\phi\tan\delta(t)), \ 0\leq H\leq \pi
\end{align}
and $\beta$ is the obliquity and $L$ is the longitude of the planet in its orbit measured relative to northern hemisphere's winter solstice \cite{Sellers1965}. Note that $\delta$ is the solar declination and $H$ gives the length of the half-day from sunrise to sunset. This equation for insolation has been used in models of Earth's climate \cite{Sellers1965,North1979} as well as Mars \cite{Nakamura2002}.

Rose et al.  find a Fourier-Legendre series approximation for Sellers' insolation function with Legendre polynomials for the latitudinal dependence and a Fourier cosine series for the time dependence \cite{Rose2017}.   Their approximation is
\begin{align}
\sigma_\text{RCB}(\eta,t)= 1+s_{11}(\beta)\cos(\omega t)P_1(\eta)+(s_{20}(\beta)+s_{22}(\beta)\cos(2\omega t))P_2(\eta)
\end{align}
where $\omega=2\pi/P$ and $P$ is the orbital period and 
\begin{align}
s_{11}(\beta)&=-2\sin\beta,\\
s_{20}(\beta)&=-\frac{5}{16}(2-3\sin^2\beta),\\
s_{22}(\beta)&=\frac{15}{16}\sin^2\beta,
\end{align}
Averaging this approximation over one orbital period yields the second degree Legendre approximation from this paper
\begin{align}
\frac{1}{P}\int_{0}^P\sigma_\text{RCB}(\eta,t)dt= 1+P_2(\cos(\beta))P_2(x)=\sigma_2(\eta,\cos\beta).
\end{align}
 
 North and Coakley also find a Fourier-Legendre series approximation \cite{North1979}, but  remove the dependence on obliquity and give only coefficients relevant for Earth's obliquity
\begin{align}
\sigma_\text{NC}(\eta,t)= 1+S_1\cos(2\pi t)P_1(\eta)+(S_2+S_{22}\cos(4\pi t))P_2(\eta)
\end{align}
where $S_1=-0.796$, $S_2=-4.77$, and $S_{22}=0.147$.  Again, averaging this approximation over one period of the orbit yields the Legendre series, which in this case is also North's approximation
\begin{align}
\frac{1}{P}\int_{0}^P\sigma_\text{NC}(\eta,t)dt= 1+S_2P_2(x)\approx\sigma_\text{North}(\eta).
\end{align}

\section{Discussion}

We have derived the Legendre series expansion for the insolation distribution on rapidly rotating planets as a function of sine of the latitude and the planet's obliquity. Furthermore, we give an explicit formula for the coefficients of this series as it depends on the obliquity.

\begin{figure}
\begin{center}
\includegraphics[width=.32\textwidth]{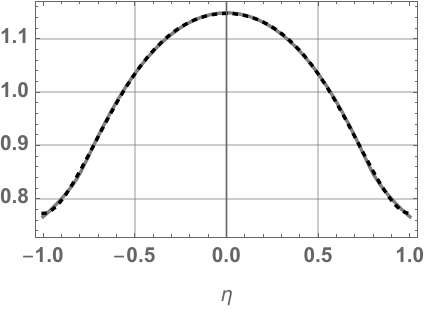}
\includegraphics[width=.32\textwidth]{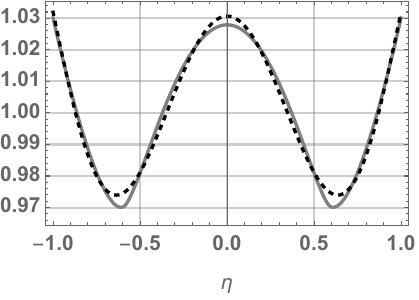}
\includegraphics[width=.32\textwidth]{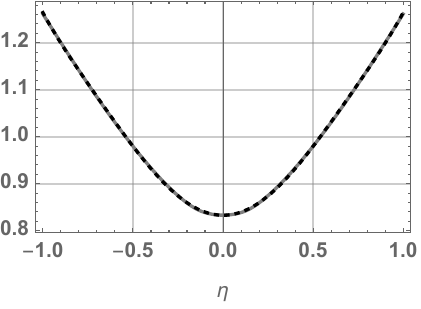}
\end{center}
\caption{Insolation distribution for various obliquity values in Mars' orbital history.  In all plots we give the actual (gray) and sixth order approximation to the insolation distribution (black dashed). Note: the vertical scales are different across plots and the actual and approximations plots are nearly identical by eye for the left and right figures.  In (a) we have $\beta=37^\circ$ which is Mars' average obliquity as given in Laskar, in (b) we have $\beta=54^\circ$ which gives an example where polar and equatorial regions receive about the same yearly insolation, and in (c) we have $\beta=82^\circ$, which is the maximal possible obliquity}
\label{Insolation-Mars}
\end{figure}

Being able to compute insolation by latitude as an explicit function of obliquity is particularly important when modeling exoplanets or in the case of Mars due to the chaotic nature of Mars' obliquity over the course of 5 billion years. Laskar et al., \cite{Laskar2004}, showed that the obliquity of Mars ranges from $\beta\approx24^\circ$ to $\beta\approx82^\circ$.  Over this range in obliquity, the insolation distribution changes drastically, going from a downward facing parabolic shape, to a strong `W' shape, to an upward facing parabolic shape (see Figure \ref{Insolation-Mars}).  In modeling the climate of Mars over time, it would be necessary to have an algebraic representation of the insolation with explicit dependence on obliquity. 

We also compare finite truncations of this series to other approximations which exist in the literature.  We conclude that the sixth degree approximation is the optimal approximation to use for rapidly rotating planets because it is continuous in the sine of the latitude, it has smaller error in both the $L^2$ and $\mathcal C^0$ norms compared to other approximations, it has explicit dependence on the obliquity, and it is the lowest degree approximation that captures the qualitative characteristics of the distribution for all obliquities.

\section*{Declarations of Interest}
None.

\section*{Funding}
This work was supported by the Mathematics and Climate Research Network (NSF Grants DMS-0940366 and DMS-0940363), the Mathematical Sciences Postdoctoral Research Fellowship  (Award Number DMS-1902887), and an Interdisciplinary Doctoral Fellowship from the Graduate School at the University of Minnesota. Funding sources did not have any role in study design; the collection, analysis and interpretation of data and methods; the writing of the report; and the decision to submit the article for publication.

\section*{Acknowledgements}
The authors would like to thank Nikole Lewis for help in framing the application and an anonymous reviewer for their close reading and constructive comments for improvement of the manuscript.




\begin{thebibliography}{00}


%
\bibitem{Abbot2011} D. Abbot, A. Voigt, and D.Koll. ``The Jormungand global climate state and implications for Neoproterozoic glaciations.'' \emph{Journal of Geophysical Research: Atmospheres} 116.D18 (2011).

%
\bibitem{Barry2017} A. Barry,  E.  Widiasih, and R. McGehee. ``Nonsmooth frameworks for an extended Budyko model.'' :\emph{Discrete and Continuous Dynamical Systems-Series B} 22.6 (2017): pp. 2447--2463

%
\bibitem{BatemanII} H. Bateman. \emph{Higher Transcendental Functions [Volume II].} A. Erd\'elyi, editor. McGraw--Hill (1953).

%
\bibitem{Checlair2017} J. Checlair, K.  Menou, and D. Abbot. ``No snowball on habitable tidally locked planets.'' \emph{The Astrophysical Journal} 845. 2 (2017): pp. 132.

%
\bibitem{Chylek1975} P. Chylek and J. Coakely. ``Analytical analysis of a Budyko-type climate model. \emph{Journal of the Atmospheric Sciences} 32 (1975): pp. 675--679.

%
\bibitem{Davis1972} P. Davis, (1972), ``6. Gamma function and related functions,'' in \emph{Handbook of Mathematical Functions with Formulas, Graphs, and Mathematical Tables.} M. Abramowitz, and I. Stegun (eds.). National Bureau of Standards (1972).

%
\bibitem{Dobrovolskis2009} A. Dobrovolskis. ``Insolation patterns on synchronous exoplanets with obliquity.'' \emph{Icarus} 204.1 (2009): pp. 1--10.

%
\bibitem{Dobrovolskis2013} A. Dobrovolskis. ``Insolation on exoplanets with eccentricity and obliquity.'' \emph{Icarus} 226.1 (2013): pp. 760--776.

%
\bibitem{Dobrovolskis2015} A. Dobrovolskis. ``Insolation patterns on eccentric exoplanets.'' \emph{Icarus} 250 (2015): pp. 395--399.

%
\bibitem{Earle2017a} A. Earle, et al. ``Long-term surface temperature modeling of Pluto.'' \emph{Icarus.} 287 (2017): pp 37--46.

%
\bibitem{Earle2018} A. Earle, et al. ``Albedo matters: Understanding runaway albedo variations on Pluto.'' \emph{Icarus.} 303 (2018): pp 1--9.

%
\bibitem{Edmonds1974}  A. R. Edmonds. \emph{Angular momentum in quantum mechanics.} Princeton University Press, (1974)

%
\bibitem{Gillon2017} M. Gillon, et al. ``Seven temperate terrestrial planets around the nearby ultracool dwarf star TRAPPIST-1.'' \emph{Nature} 542.7642 (2017): 456-460.

%
\bibitem{Laskar2004} J. Laskar, et al. ``Long term evolution and chaotic diffusion of the insolation quantities of Mars.'' \emph{Icarus} 170 (2004): 343--364.

%
\bibitem{McGehee2012} R. McGehee and C. Lehman.  ``A paleoclimate model of ice-albedo feedback forced by variations in Earth's orbit.'' \emph{SIAM Journal on Applied Dynamical Systems} 11.2 (2012): 684-707.

%
\bibitem{McGehee2014} R. McGehee, E.  Widiasih. ``A quadratic approximation to Budyko's ice-albedo feedback model with ice line dynamics.'' \emph{SIAM Journal on Applied Dynamical Systems} 13.1 (2014): 518-536.

%
\bibitem{Nadeau2017} A. Nadeau and R. McGehee. ``A simple formula for a planet's mean annual insolation by latitude.'' \emph{Icarus} 291 (2017): 46-50.

%
\bibitem{Nakamura2002} T. Nakamura, and E. Tajika. ``Stability of the Martian climate system under the seasonal change condition of solar radiation.'' \emph{Journal of Geophysical Research: Planets} 107.E11 (2002): 4-1.

%
\bibitem{North1975a} G. North. ``Theory of Energy-Balance Climate Models.'' \emph{Journal of the Atmospheric Sciences} 32.11 (1975): 2033--2043.

%
\bibitem{North1979} G. North, and J. Coakley. ``Differences between Seasonal and Mean Annual Energy Balance Model Calculations of Climate and Climate Sensitivity.'' \emph{Journal of the Atmospheric Sciences} 36.7 (1979): 1189-1204.

%
\bibitem{Ojakangas1989} G. Ojakangas and D. Stevenson. ``Thermal State of an Ice Shell on Europa.'' \emph{Icarus} 81 (1989): 220--241.

%
\bibitem{Rose2017} B. Rose,  T. W. Cronin, and C. M. Bitz. ``Ice Caps and Ice Belts: The Effects of Obliquity on Ice--Albedo Feedback.'' \emph{The Astrophysical Journal} 846.1 (2017): 28.

%
\bibitem{Sellers1965} W. Sellers, \emph{Physical Climatology} Chicago, IL: The University of Chicago Press. (1965).

%
\bibitem{Stern2018} S. Stern, et al. ``The Pluto system after New Horizons.'' \emph{Annual Review of Astronomy and Astrophysics} 56 (2018): 357-392.

%
\bibitem{Szego1975} G. Szeg\"o, \emph{Orthogonal Polynomials} 4th Edition Providence, RI: American Mathematical Society. (1975).

%
\bibitem{Thompson2018} S. Thompson, et al. ``Planetary candidates observed by Kepler. VIII. A fully automated catalog with measured completeness and reliability based on data release 25.'' \emph{The Astrophysical Journal Supplement Series} 235.2 (2018): 38.

%
\bibitem{Walsh2015} J. Walsh and C. Rackauckas. ``On the Budyko-Sellers energy balance climate model with ice line coupling.'' \emph{Disc. Cont. Dyn. Syst. B} 20.7 (2015).

%
\bibitem{Ward1974} W. Ward. ``Climatic variations on Mars: 1. Astronomical theory of insolation.'' \emph{Journal of Geophysical Research} 79.24 (1974): 3375-3386.

%
\bibitem{Whittaker1990} E. Whittaker and G. Watson. \emph{A course of modern analysis.} Cambridge University Press. (1990).

%
\bibitem{Widiasih2013} E. Widiasih. ``Dynamics of the Budyko energy balance model.'' \emph{SIAM Journal on Applied Dynamical Systems} 12.4 (2013): 2068-2092.

\end{thebibliography}


\end{document}